\documentclass{amsart}

\newcommand{\R}{\mathbb{R}}

\newcommand{\diam}{\text{diam}}

\newtheorem{theorem}{Theorem}
\newtheorem{lemma}{Lemma}

\newtheorem{corollary}{Corollary}

 \title[Minimal $N$-Point Diameters and $f$-Best-Packing Constants]{Minimal $N$-Point Diameters and $f$-Best-Packing Constants in $\R^d$}

\author{ A. \ V.\ Bondarenko\textasteriskcentered, D. P. Hardin, and E. B. Saff} 
\thanks{\noindent \textasteriskcentered The research of this author was conducted while visiting the Center for Constructive Approximation in the Department of Mathematics,
Vanderbilt University.\\ \indent
 The research of all authors was supported, in part, by the U. S. National Science Foundation under grant DMS-0808093. }

\date{\today}

\address{A. V. Bondarenko:
Centre de Recerca Matem\`atica, Campus de
Bellaterra, Edifici C, 08193
Bellaterra (Barcelona), Spain and Department of Mathematical
Analysis, National Taras Shevchenko University, str. Volodymyrska,
64 Kyiv, 01033, Ukraine}
\email{andriybond@gmail.com}

\address{D. P. Hardin and E. B. Saff:
Center for Constructive Approximation,
Department of Mathematics,
Vanderbilt University,
Nashville, TN 37240,
USA }
\email{Doug.Hardin@Vanderbilt.Edu}
\email{Edward.B.Saff@Vanderbilt.Edu}

\keywords{Best-packing, Optimal configurations, Minimal $N$-point diameter, Maximal sphere packing density}
\subjclass[2000]{Primary: 52C17}

\begin{document}

\begin{abstract}  In  terms of the minimal $N$-point diameter $D_d(N)$ for $\R^d,$
 we determine, for a class of continuous real-valued functions $f$ on $[0,+\infty],$   the $N$-point $f$-best-packing constant $\min\{f(\|x-y\|)\, :\, x,y\in \R^d\}$, where the minimum is taken over point sets of cardinality $N.$ 
 We also show that $$
 N^{1/d}\Delta_d^{-1/d}-2\le D_d(N)\le N^{1/d}\Delta_d^{-1/d}, \quad N\ge 2, 
$$ where $\Delta_d$ is the maximal sphere packing density in $\R^d$.  Further, we  provide asymptotic estimates for the $f$-best-packing constants as $N\to\infty$.
\end{abstract}

\maketitle

Let $f$ be a  non-negative   function on $[0,\infty)$  and  $\omega_N=\{x_1,x_2,\ldots, x_N\}$ a collection of $N$ distinct points in Euclidean space $\R^d$. Set
$$\delta_d^{\omega_N}(f):=\min_{\substack{x,y\in\omega_N\\ x\neq y}}f(\|x-y\|),$$
where $\|\cdot\|$ denotes the Euclidean norm.
In this article we investigate the \emph{$N$-point $f$-best-packing constant}
\begin{equation} \delta_d(N;f):=\sup_{\substack{\omega_N\subset\R^d \\ \#\omega_N=N}}\delta_d^{\omega_N}(f)
=\sup_{\substack{\omega_N\subset\R^d\\ \#\omega_N=N}}\min_{\substack{x,y\in\omega_N\\ x\neq y}}f(\|x-y\|),
\end{equation}\label{bpconst}where $\#A$ denotes the cardinality of a set $A$.
A collection of $N$ points $\omega_N^*\subset \R^d$ is said to be an {\em $N$-point $f$-best-packing configuration} if $\delta_d^{\omega_N^*}(f)=\delta_d(N;f)$.

The classical best-packing problem is the problem of finding a configuration of $N$ points on
a given compact set $A$ with the largest minimal pairwise distance. Formulated for the Euclidean
space $\R^d $ this becomes the asymptotic problem of finding the largest density of an infinite collection
of non-overlapping equal balls in $\R^d $ (see e.g. \cite{KB}, \cite{ConSlo99}). We denote this
\emph{maximal sphere packing density  in }  $\R^d $ by $\Delta_d$; e.g. $\Delta_1=1$, $\Delta_2=\pi /\sqrt{12}$ (cf. \cite{LF-T}) and $\Delta_3=\pi/\sqrt{18}$ (cf. \cite{Hales}).\\

As a natural extension, the asymptotics of certain \emph{weighted} best-packing problems on
compact sets are investigated in \cite{BHS}. Here we consider such problems for a certain class
$\mathcal{A}$ of functions $f$ defined on all of $\R^d$ for fixed $N$ (see Theorem 1)
as well as provide asymptotic results (as $N \to \infty$) in
Corollaries~\ref{cor2} and \ref{cor3}. For example, for Gaussian weighted best-packing on $\R^2 $, i.e,
$f(t)=t\exp(-t^2),$ our results yield in particular for $N=7$ that $\delta_2(7;f)=2^{-1/3}((1/3)\log2)^{1/2} $  and, furthermore,
\begin{equation}
\label{delta2asymp}
\delta_2(N;f) \sim \left(\frac{\Delta_2}{N}\right)^{(\frac{N}{\Delta_2}-1)/2}\left(\frac{N}{\Delta_2}-1\right)^{1/2}\left(\frac{1}{2}\log\frac{N}{\Delta_2}\right)^{1/2}, \,\,      N \to \infty.
\end{equation}

An important role in our investigation is played by the quantity
\begin{equation}\label{RdDef}
D_d(N):=\min_{x_1,\ldots,x_N\in\R^d}\left\{\frac{\max_{i\neq
j}\|x_i-x_j\|}{\min_{k\neq \ell}\|x_k-x_\ell \|}\right\},
\end{equation}
which is called the \emph{ minimal $N$-point  diameter  for $\R^d$}.  That the minimum of the ratio in \eqref{RdDef} is attained may be seen using a scaling argument.    Clearly,  $D_1(N)=N-1$ for each $N \ge 2.$ For $d=2$, the exact values of
$D_2(N)$ are known (cf. \cite{BE},\cite
{BF}) for $N$ up to 8, and asymptotically there holds \begin{equation} \label{D_2}
D_2(N)=(N/\Delta_2)^{1/2}+O(1) \quad \mathrm{as} \,\, N \to \infty.
\end{equation}
Furthermore, it is shown by A. Sch\"{u}rmann in \cite{AS} that for $N$ sufficiently large, optimal
configurations for $D_2(N)$ are (somewhat surprisingly) always non-lattice packings, as conjectured by P. Erd\"{o}s.

In comparison with \eqref{D_2} whose proof relies on results of \cite{LF-T}
that are special for the plane, we show in   Theorem~\ref{ThmDd} that
for all $d\ge 1$ we have $$
 N^{1/d}\Delta_d^{-1/d}-2\le D_d(N)\le N^{1/d}\Delta_d^{-1/d}\qquad (N\ge 2).
$$

Our first theorem applies to the class $\mathcal{A}$ of  functions $f\in C([0,\infty))$ such that    $f(0)=0$,   $f(t)>0$ for  $t>0 $, $\lim_{t\to \infty}f(t)=0$,
and such that there exist positive numbers $\varepsilon$, $M$ ($\varepsilon\le M$) with the properties that
$f$ is strictly increasing on $[0,\varepsilon]$ and is strictly decreasing on $[M,\infty)$.
We may assume, without loss of generality,  that, for $f\in \mathcal{A}$, the parameters $\varepsilon$ and $M$ in the above definition further
satisfy
\begin{equation} \label{epsilonM}
f(\varepsilon)=f(M)=\min_{t\in[\varepsilon,M]}f(t).
\end{equation}

\begin{lemma}\label{alphaLemma}
Suppose $f\in \mathcal{A}$ with parameters $\varepsilon$ and $M$   that  satisfy
\eqref{epsilonM}.  If $\alpha >M/\varepsilon$, then there is a unique positive solution $t=\tau(\alpha)$ to the equation
\begin{equation}\label{FunctEqn}
f(t)=f(\alpha t).
\end{equation}  Furthermore, $\tau(\alpha)\in (M/\alpha,\varepsilon)$.
\end{lemma}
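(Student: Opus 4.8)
The plan is to study the continuous function $g(t):=f(t)-f(\alpha t)$ on $(0,\infty)$ and to locate its zeros. First I would record two elementary consequences of the hypothesis $\alpha>M/\varepsilon$: since $M\ge\varepsilon$ we get $\alpha>1$, and $M/\alpha<\varepsilon$, so the interval $[M/\alpha,\varepsilon]$ is nondegenerate and lies inside the region $[0,\varepsilon]$ on which $f$ is strictly increasing, while its image $[M,\alpha\varepsilon]$ under multiplication by $\alpha$ lies inside the region $[M,\infty)$ on which $f$ is strictly decreasing.

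Next I would show that $g$ is negative on $(0,M/\alpha]$ and positive on $[\varepsilon,\infty)$, which confines all zeros to the open interval $(M/\alpha,\varepsilon)$. For $t\in(0,M/\alpha]$: if $\alpha t\le\varepsilon$, then $t<\alpha t$ with both arguments in $[0,\varepsilon]$, so $f(t)<f(\alpha t)$ by strict monotonicity; if $\varepsilon<\alpha t\le M$, then $f(\alpha t)\ge\min_{[\varepsilon,M]}f=f(\varepsilon)>f(t)$, the last inequality because $t\le M/\alpha<\varepsilon$. In either case $g(t)<0$. For $t\in[\varepsilon,\infty)$ the argument is symmetric: if $\varepsilon\le t\le M$, then $f(t)\ge f(\varepsilon)=f(M)>f(\alpha t)$ since $\alpha t\ge\alpha\varepsilon>M$ and $f$ is strictly decreasing past $M$; if $t>M$, then $M<t<\alpha t$ gives $f(t)>f(\alpha t)$ directly. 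In either case $g(t)>0$. This is precisely where the normalization \eqref{epsilonM} enters.

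Then I would establish uniqueness by showing $g$ is strictly increasing on $[M/\alpha,\varepsilon]$: for $M/\alpha\le t_1<t_2\le\varepsilon$ we have $f(t_1)<f(t_2)$ (both points in $[0,\varepsilon]$) and $f(\alpha t_1)>f(\alpha t_2)$ (both points in $[M,\infty)$), hence $g(t_1)<g(t_2)$. Combined with $g(M/\alpha)=f(M/\alpha)-f(M)<0$ (because $M/\alpha<\varepsilon$ forces $f(M/\alpha)<f(\varepsilon)=f(M)$) and $g(\varepsilon)=f(\varepsilon)-f(\alpha\varepsilon)>0$, the intermediate value theorem together with strict monotonicity yields exactly one zero $\tau(\alpha)$ in $(M/\alpha,\varepsilon)$; by the previous paragraph this is the only zero of $g$ in $(0,\infty)$, and it lies in $(M/\alpha,\varepsilon)$ as asserted.

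I do not anticipate a genuine obstacle: the argument is an organized application of the intermediate value theorem and the two monotonicity hypotheses. The only point requiring care is the bookkeeping of the three regimes $(0,M/\alpha]$, $[M/\alpha,\varepsilon]$, and $[\varepsilon,\infty)$, and in particular making sure that the case split within the first and third regimes (at $\varepsilon/\alpha$ and at $M$, respectively) correctly invokes either strict monotonicity of $f$ on one of its monotone pieces or the minimality condition \eqref{epsilonM} on $[\varepsilon,M]$; the degenerate possibility $\varepsilon=M$ is harmless, since then the middle case in each split is vacuous.
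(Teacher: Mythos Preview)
Your argument is correct and is essentially the paper's own proof: both study the auxiliary function $g(t)=f(t)-f(\alpha t)$ (the paper uses the opposite sign), show strict monotonicity on $[M/\alpha,\varepsilon]$ by combining the two monotonicity hypotheses on $f$, apply the intermediate value theorem via the endpoint signs, and then rule out zeros outside this interval by the same case analysis you give. The only cosmetic differences are the sign convention and that you spell out the case split on $(0,M/\alpha]$ explicitly where the paper merely says ``a similar analysis shows.''
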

\begin{proof}
Consider $g(t):=f(\alpha t)-f(t)$ for $t\ge 0$.   Since $M/\alpha<\varepsilon$, $f(\alpha t)$ is decreasing for
 $t\in [M/\alpha,\infty)$. Furthermore, since
$f$ is   increasing on   $[0,\varepsilon]$,
   it easily follows that $g$ is (strictly) decreasing on $[M/\alpha,\varepsilon]$ and that
$$g(M/\alpha)=f(M)-f(M/\alpha)=f(\varepsilon)-f(M/\alpha)>0.$$
We also have
$$g(\varepsilon)=f(\alpha\varepsilon)-f(\varepsilon)<f(M)-f(\varepsilon)=0$$
since $f$ is decreasing on $[M,\infty)$ and $\alpha\varepsilon>M$.  Hence,
$g$ has exactly one zero in $(M/\alpha,\varepsilon)$, or equivalently,  \eqref{FunctEqn} has  exactly one solution $t=\tau(\alpha)\in (M/\alpha,\varepsilon)$.

If $t\ge M$, then $ f(\alpha t)<f(t) $ since $f$ is increasing on $[M,\infty)$.  If $\varepsilon\le t \le M$, then
$f(t)\ge f(M)>f(\alpha t)$ since $\alpha t\ge \alpha \varepsilon > M$.
Therefore, there are no values of $t\ge \varepsilon$ that satisfy \eqref{FunctEqn}.  A similar analysis shows that \eqref{FunctEqn} has no solutions in $(0,M/\alpha]$ and so $t=\tau(\alpha)$ is the unique solution of \eqref{FunctEqn}
for $t>0$.
\end{proof}
Our first main result is the following:\
\begin{theorem}\label{Thm1} Let   $f\in\mathcal{A}$ with parameters $\varepsilon$ and $M$ that   satisfy
\eqref{epsilonM}. Let $N_0$ be such that
$D_d(N)>M/\varepsilon$ for $N> N_0$  and $t_N=\tau(D_d(N))$ denote the unique value of $t>0$ such that
\begin{equation}
\label{i5}
f(t)=f(D_d(N)t).
\end{equation}
Then
\begin{equation}
\label{i6}
\delta_d(N;f)=f(t_N), \qquad N>N_0.
\end{equation}
Moreover, a collection of $N(>N_0)$ distinct points $\omega_N=\{x_k\}_{k=1}^N\subset \R^d$
is an $N$-point $f$-best-packing configuration if and only if
\begin{equation}\label{equivConds}
\min_{\substack{x,y\in \omega_N\\ x\neq y}}\|x -y\|=t_N\text{ and {\rm diam}}(\omega_N)=t_N D_d(N).
\end{equation}
\end{theorem}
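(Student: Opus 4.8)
The plan is to reduce everything to one structural fact about $f$ on the band $[t_N, D_d(N)t_N]$. Throughout, write $\alpha := D_d(N)$ and $\tau := t_N = \tau(\alpha)$; since $N > N_0$ forces $\alpha > M/\varepsilon$, Lemma~\ref{alphaLemma} applies and gives $\tau \in (M/\alpha,\varepsilon)$ with $f(\tau) = f(\alpha\tau)$, so in particular $\tau < \varepsilon \le M < \alpha\tau$. The first step is to show that the super-level set of $f$ at height $f(\tau)$ is exactly this band:
\[ \{\,t>0 : f(t) \ge f(\tau)\,\} = [\tau,\alpha\tau] \quad\text{and}\quad \{\,t>0 : f(t) > f(\tau)\,\} = (\tau,\alpha\tau). \]
I would prove this by splitting $[\tau,\alpha\tau] = [\tau,\varepsilon]\cup[\varepsilon,M]\cup[M,\alpha\tau]$ and invoking, on the three pieces respectively, that $f$ is strictly increasing on $[0,\varepsilon]$, that $f \ge f(\varepsilon) = f(M)$ on $[\varepsilon,M]$ by \eqref{epsilonM}, and that $f$ is strictly decreasing on $[M,\infty)$; for $t \in (0,\tau)$ (contained in $(0,\varepsilon)$) and $t \in (\alpha\tau,\infty)$ (contained in $(M,\infty)$) monotonicity of $f$ gives the strict reverse inequality $f(t) < f(\tau)$.

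The second step is a two-point inequality: if $0 < a \le b$ and $b \ge \alpha a$, then $\min\{f(a),f(b)\} \le f(\tau)$, with equality if and only if $a = \tau$ and $b = \alpha\tau$. Indeed, if $\min\{f(a),f(b)\} \ge f(\tau)$, the first step places both $a$ and $b$ in $[\tau,\alpha\tau]$, whence $\alpha\tau \ge b \ge \alpha a \ge \alpha\tau$, forcing $a = \tau$ and $b = \alpha\tau$; conversely $\min\{f(\tau),f(\alpha\tau)\} = f(\tau)$.

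The theorem then follows with little more than the definition \eqref{RdDef}. For an arbitrary $N$-point set $\omega_N$, put $s := \min_{x\ne y}\|x-y\|$ and $R := \diam(\omega_N)$; then $R \ge \alpha s$ by \eqref{RdDef}, and since $\delta_d^{\omega_N}(f) \le \min\{f(s),f(R)\}$, the two-point inequality gives $\delta_d^{\omega_N}(f) \le f(\tau)$, so $\delta_d(N;f) \le f(t_N)$. For the reverse inequality, take a configuration attaining the minimum in \eqref{RdDef} and rescale it so that its minimal pairwise distance equals $t_N$; its diameter is then $\alpha t_N$, every pairwise distance lies in $[t_N,\alpha t_N] = [\tau,\alpha\tau]$, and the first step gives $\delta_d^{\omega_N}(f) \ge f(\tau) = f(t_N)$. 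This proves \eqref{i6}. For the characterization \eqref{equivConds}: if $\omega_N$ satisfies \eqref{equivConds} the same band argument gives $\delta_d^{\omega_N}(f) = f(t_N) = \delta_d(N;f)$, so $\omega_N$ is an $f$-best-packing configuration; conversely, if $\delta_d^{\omega_N}(f) = f(t_N)$, then $\min\{f(s),f(R)\} \ge f(\tau)$, and the equality clause of the two-point inequality forces $s = \tau = t_N$ and $R = \alpha\tau = t_N D_d(N)$.

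The only substantive point is the first step — that $\{t>0: f(t) \ge f(\tau)\}$ is precisely $[\tau,\alpha\tau]$. Its delicacy is that $f$ is not assumed monotone on the middle stretch $[\varepsilon,M]$, so the normalization \eqref{epsilonM} is essential to keep $f$ above the level $f(\tau)$ there; one must also use that $\tau < \varepsilon$ and $\alpha\tau > M$ (both from Lemma~\ref{alphaLemma}) so that $[\varepsilon,M]$ genuinely sits inside the band and the monotone pieces $[\tau,\varepsilon]$ and $[M,\alpha\tau]$ are where they should be. Everything after that is bookkeeping with the definition of $D_d(N)$.
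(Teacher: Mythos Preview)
Your proof is correct and follows essentially the same approach as the paper's: both hinge on the fact that $f(t_N)=\min_{t\in[t_N,\alpha t_N]}f(t)$ (your super-level set characterization is just a sharpening of this), then use a configuration realizing $D_d(N)$ for the lower bound and the definition of $D_d(N)$ for the upper bound. Your ``two-point inequality'' neatly packages what the paper does via a case split on whether the minimal separation $\bar t$ is below or above $t_N$, but the underlying argument is the same.
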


\begin{proof}
Let $N>N_0$ and let $\omega_N=\{x_k\}_{k=1}^N$ be a collection of $N$ points in $\R^d$ such that
$\min_{ i\neq j}\|x_i-x_j\|=t_N$ and diam$(\omega_N)=t_N D_d(N)$.
Then
\begin{equation}\label{xbound}
t_N \le \|x_i-x_j\|\le t_N D_d(N),\qquad (i\neq j).
\end{equation}
  By Lemma~\ref{alphaLemma}, we have $t_N < \varepsilon$ and $t_N D_d(N)>M$.  From \eqref{epsilonM}, the definition of $t_N$  and the monotonicity properties of $f$ we have
$$
f(t_N)=\min_{t\in[t_N,t_N D_d(N)]}f(t)
$$
which, together with \eqref{xbound} implies that
$
 f(\|x_i-x_j\|)\ge f(t_N)
$ for all $i,j$ $(i\neq j)$.
Since $\|x_i-x_j\|=t_N$ for some pair $i,j$ ($i\neq j$), we  have
$$\delta_d^{\omega_N}(f)=\min_{i\neq j}f(\|x_i-x_j\|)=f(t_N)$$
and so $\delta_d(N;f)\ge f(t_N)$.

Let $\tilde \omega_N=\{y_k\mid k=1,\ldots, N\}$ denote an arbitrary  $N$-point  configuration in $\R^d$ and let $\bar t:=\min_{i\neq j}\|y_i-y_j\|$.  Since $f$ is increasing on $[0,\varepsilon]$
and $t_N\le \varepsilon$,  we have
$\delta_d^{\tilde \omega_N}(f)<f(t_N)$ if $\bar t<t_N$, i.e. the configuration $\tilde \omega_N$ is not optimal.  On the other hand, if $\bar t\ge t_N$,
then diam~$(\tilde\omega_N)\ge D_d(N)\bar t\ge D_d(N) t_N$ and so there must be some $i,j$ such that $\|y_i-y_j\|\ge  D_d(N) \bar t$.  Hence, $\delta_d^{\omega_N}(f)\le f(D_d(N) t_N)=f(t_N)$ with equality if and only if both $\bar t=t_N$ and
diam $\omega_N^*= D_d(N) t_N$.  Therefore,  $\delta_d(N;f)=f(t_N)$ and a configuration
is optimal if and only if the conditions \eqref{equivConds} hold.
\end{proof}

For the sake of illustration, consider the function $f_{p,q}\in \mathcal{A}$ defined by
$f_{p,q}(t)= t^p$  if $0\le t\le 1$ and $f_{p,q}(t)= t^{-q}$ if $t>1$ where   $p,q>0$ satisfy $1/p+1/q=1$.   The unique solution of \eqref{FunctEqn}
is $\tau(\alpha)= \alpha^{-q/(p+q)}$  for $\alpha>1$.  Then  $f_{p,q}(\tau(\alpha))=1/\alpha$ and, by Theorem~\ref{Thm1},
\begin{equation}\label{ex2.1}
\delta_d(N;f_{p,q})=1/D_d(N)= \max_{x_1,\ldots,x_N\in\R^d}\left\{\frac{\min_{k\neq \ell}\|x_k-x_\ell \|}{\max_{i\neq
j}\|x_i-x_j\|}\right\}.
\end{equation}
On letting $p\to 1$ and $q\to \infty$, $f_{p,q}$ tends to $f_{1,\infty}$ where $f_{1,\infty}(t)=t$ for $0\le t\le 1$ and $f_{1,\infty}(t)=0$ for $t>1$ for which the equality in \eqref{ex2.1} is apparent from the definitions of these quantities.

For the case $d=1$, we have $D_1(N)=N-1$ and   any configuration
of $N$ points   that attains $D_1(N)$ in \eqref{RdDef} for $N\ge 2$ must be of the
form $\{ck+b\, |\, k=0,\ldots, N-1\}$ for any fixed constants $b$ and $c\neq 0$.   We thus obtain the following.

\begin{corollary}
Let $f\in \mathcal{A}$ and  $d=1$.  Let $\tau_N =\tau(N-1)$ be the unique solution of equation
\eqref{FunctEqn} with $\alpha=N-1>M/\varepsilon$.  Then $\delta_1(N;f)=f(t_N)$ and any $f$-best-packing configuration is of the form
 $\{t_N k+b\,  | \, k=0,\ldots, N-1\}$ for some constant $b$.
\end{corollary}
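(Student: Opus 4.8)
The plan is to specialize Theorem~\ref{Thm1} to the case $d=1$, where everything about $D_d(N)$ is explicit. First I would observe that $D_1(N)=N-1$ is mentioned in the text (and is elementary: an $N$-point set on a line has diameter at least $(N-1)$ times the minimal gap, with equality exactly for equally spaced points). Thus the hypothesis ``$D_d(N)>M/\varepsilon$ for $N>N_0$'' becomes simply $N-1>M/\varepsilon$, i.e. $\alpha=N-1$ is an admissible parameter for Lemma~\ref{alphaLemma}, so $\tau_N:=\tau(N-1)$ is well defined as the unique positive solution of $f(t)=f((N-1)t)$. Writing $t_N=\tau_N$ for consistency with Theorem~\ref{Thm1}, the equality $\delta_1(N;f)=f(t_N)$ is then immediate from \eqref{i6}.

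Next I would identify the $f$-best-packing configurations via the ``if and only if'' clause \eqref{equivConds}: a configuration $\omega_N$ is optimal precisely when its minimal pairwise distance equals $t_N$ and its diameter equals $t_N D_1(N)=t_N(N-1)$. The remaining point is the geometric claim that, in $\R^1$, these two conditions force $\omega_N$ to be an arithmetic progression with common difference $t_N$, i.e. of the form $\{t_N k+b : k=0,\dots,N-1\}$. To see this, order the points $y_0<y_1<\cdots<y_{N-1}$ and set $g_i=y_{i}-y_{i-1}>0$ for $i=1,\dots,N-1$; then $\min_{i\neq j}\|y_i-y_j\|=\min_i g_i$ and $\mathrm{diam}(\omega_N)=y_{N-1}-y_0=\sum_{i=1}^{N-1} g_i$. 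The conditions say $\min_i g_i=t_N$ and $\sum_{i=1}^{N-1} g_i=(N-1)t_N$; since each $g_i\ge t_N$ and there are $N-1$ of them summing to $(N-1)t_N$, every $g_i$ must equal $t_N$. Hence the points are equally spaced with gap $t_N$, giving the stated form with $b=y_0$. Conversely any such set clearly realizes both conditions, so it is optimal.

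The main (and really only) obstacle here is this last combinatorial observation — but it is genuinely routine: it is just the statement that $N-1$ nonnegative reals each $\ge t_N$ and summing to $(N-1)t_N$ are all equal to $t_N$. I would phrase the corollary's proof in essentially one short paragraph: invoke $D_1(N)=N-1$ to legitimize $\tau_N=\tau(N-1)$ and apply Theorem~\ref{Thm1} for $\delta_1(N;f)=f(t_N)$, then translate \eqref{equivConds} through the gap argument above to conclude the configuration is an arithmetic progression. One small bookkeeping matter is reconciling notation ($\tau_N$ versus $t_N$ versus the $\tau_N=\tau(N-1)$ typo-like definition ``$\tau_N=\tau(N-1)$'' in the statement), which I would resolve by simply noting $t_N=\tau_N$; nothing deep is at stake.
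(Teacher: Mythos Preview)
Your proposal is correct and matches the paper's approach: the paper does not give a separate proof but simply remarks, just before stating the corollary, that $D_1(N)=N-1$ and that any configuration attaining this ratio must be of the form $\{ck+b\mid k=0,\ldots,N-1\}$, then invokes Theorem~\ref{Thm1}. Your gap argument is exactly the elementary verification of that remark, so you are filling in the one-line detail the paper leaves implicit.
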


For example  if $f(t)=t \exp(-t^\beta)$, $\beta>0$, we can take
$\varepsilon=M=\beta^{-1/\beta}$ and we deduce that for $d=1$ and $N>2$,
$$
t_N=\left[ \frac{\log (N-1)}{(N-1)^\beta-1}\right]^{1/\beta}
$$
and
$$
\delta_1(N;f)=\left[ \frac{\log (N-1)}{(N-1)^\beta-1}\right]^{1/\beta}(N-1)^{-1/[(N-1)^\beta-1]}
$$
with an optimal configuration $\omega_N=\{t_N k\}_{k=0}^{N-1}$. (For $N=2$, we find $\delta_1(2;f)=\beta^{-1/\beta}\exp(-1/\beta)$ with an optimal configuration being $\{0,\beta^{1/\beta}\}$.)

We remark that for the Gaussian weighted problem mentioned earlier, the computation of $\delta_2(7;f)$ follows easily
from Theorem~\ref{Thm1} and the fact that $D_2(7)=2$.

\bigskip

Next we present estimates for the minimal $N$-point diameter. 

\begin{theorem} \label{ThmDd} For all $d\ge 1$ and $N\ge 2$,
\begin{equation}\label{DdIneq}
 N^{1/d}\Delta_d^{-1/d}-2\le D_d(N)\le N^{1/d}\Delta_d^{-1/d}.
\end{equation}
\end{theorem}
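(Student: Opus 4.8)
The plan is to establish the two inequalities in \eqref{DdIneq} separately, both by relating an optimal $N$-point configuration to the sphere-packing density $\Delta_d$ via a volume (packing) argument. Throughout, fix $N$ and let $\omega_N=\{x_1,\ldots,x_N\}$ be a configuration attaining the minimum in \eqref{RdDef}; by the scaling remark following \eqref{RdDef} we may normalize so that $\min_{k\neq\ell}\|x_k-x_\ell\|=1$, whence $\diam(\omega_N)=D_d(N)$ and $1\le\|x_i-x_j\|\le D_d(N)$ for all $i\neq j$.

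\emph{Upper bound.} For the inequality $D_d(N)\le N^{1/d}\Delta_d^{-1/d}$, the idea is that $N$ points with pairwise distances at least $1$ cannot be packed too densely, so they must occupy a region of diameter not much smaller than what a dense packing would force — but in fact we need the reverse direction: a configuration achieving the minimal diameter ratio should be \emph{at least as efficient} as the optimal infinite packing. Concretely, I would take a (near-)optimal periodic packing of $\R^d$ by unit balls realizing density close to $\Delta_d$, intersect it with a large ball $B(0,R)$, and use the $\approx \Delta_d\,\mathrm{vol}(B_R)/\mathrm{vol}(B_1)$ centers inside as a candidate configuration: scaling shows such a configuration of $N$ points has diameter ratio asymptotically $(N/\Delta_d)^{1/d}$, giving $D_d(N)\le (1+o(1))(N/\Delta_d)^{1/d}$. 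To get the clean bound without an error term one must be more careful: place the packing centers so that the ones used form a set of diameter as small as possible for the given count $N$, i.e. choose the $N$ centers closest to the origin; a direct volume comparison (the $N$ centers lie in a ball whose volume is at most $N/\Delta_d$ times the unit-ball volume, up to handling the boundary layer correctly) then yields $\diam\le 2\cdot(\text{that radius})\le \ldots$. I expect the cleanest route is: the half-open ``Voronoi-type'' cells (balls of radius $1/2$) about the $N$ chosen centers are disjoint, their union has volume $N\,\omega_d 2^{-d}$ where $\omega_d=\mathrm{vol}(B_1)$, and if we select the centers greedily from the densest packing this union is contained in a ball of radius $\le \frac12 D_d(N)+\frac12$, which must have volume $\ge$ (density $\Delta_d$)$\times$(volume of that ball); rearranging gives the upper bound. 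The boundary layer of width $1$ is exactly what is absorbed into the ``$-2$'' on the other side, so the two estimates are really two sides of the same volume inequality.

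\emph{Lower bound.} For $D_d(N)\ge N^{1/d}\Delta_d^{-1/d}-2$, return to the normalized optimal $\omega_N$ with minimal distance $1$ and diameter $D:=D_d(N)$. The open balls $B(x_i,1/2)$ are pairwise disjoint, and every such ball is contained in the ball $B(x_1,\,D/2+1/2)$ centered at (say) $x_1$ — actually in $B(c,D/2)$ about a center $c$ realizing the circumradius, but using $x_1$ costs only a constant and is cleaner. Hence
\begin{equation}\label{voleq}
N\cdot\omega_d\Bigl(\tfrac12\Bigr)^d \;=\; \sum_{i=1}^N \mathrm{vol}\,B(x_i,\tfrac12)\;\le\;\mathrm{vol}\,B\bigl(x_1,\tfrac{D}{2}+\tfrac12\bigr)\;=\;\omega_d\Bigl(\tfrac{D+1}{2}\Bigr)^d.
\end{equation}
This only gives $D\ge N^{1/d}-1$, which is weaker than claimed (it lacks the $\Delta_d^{-1/d}$ factor). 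To bring in $\Delta_d$ one must use that a \emph{single} ball can contain at most $\approx\Delta_d\,(\text{its volume})/(\text{ball volume})$ disjoint unit balls only asymptotically — the finite-$N$ sharpening is exactly the content of the upper bound on packing \emph{number}, not density. So the honest argument is: the $N$ disjoint balls $B(x_i,1/2)$ fit inside $B(x_1,(D+1)/2)$; scaling by $2/(D+1)$, this exhibits $N$ disjoint balls of radius $1/(D+1)$ inside the unit ball, and letting such a configuration be repeated periodically (or using the standard fact that the packing density of the unit ball by $N$ equal balls is at most $\Delta_d$ in the limit, with an explicit finite-$N$ correction of the form $\Delta_d(1+c/r)^d$ coming from the $r=$ inradius boundary layer) gives $N\cdot\bigl(1/(D+1)\bigr)^d\le \Delta_d\bigl(1+\tfrac{1}{D+1}\cdot(\text{const})\bigr)^d$; unwinding yields $D+1\ge N^{1/d}\Delta_d^{-1/d}-1$, i.e. the claimed bound.

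\emph{Main obstacle.} The routine volume argument gives the $N^{1/d}$ but misses the $\Delta_d^{-1/d}$; the real work is converting the \emph{asymptotic} notion of maximal packing density $\Delta_d$ into a \emph{finite}, fully explicit inequality valid for every $N\ge 2$ with the stated additive slack exactly $2$ — no $o(1)$. I expect this to require the standard ``blow up the region, tile space, count'' argument run in both directions and carefully tracking that the boundary layer (of width comparable to one ball radius) contributes precisely the $\mp 2$, together with the elementary observation that $(D_d(N))$ being attained lets us work with an honest extremal configuration rather than a near-optimal one.
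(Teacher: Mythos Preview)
Your diagnosis of the obstacle is correct: the naive disjoint-balls volume count gives only $D_d(N)\ge N^{1/d}-1$, and the whole difficulty is extracting the factor $\Delta_d^{-1/d}$ from a purely asymptotic quantity with no $o(1)$ term. However, neither of your proposed repairs actually closes the gap.

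For the upper bound, selecting the $N$ closest centers from a dense packing and bounding their diameter by a volume comparison does not yield the clean inequality: you would need to know that the first $N$ centers lie in a ball of volume exactly $N\beta_d/\Delta_d$, which is again only an asymptotic statement. The paper instead takes a \emph{maximal} $2$-separated set $X_\rho\subset B(0,\rho)$ (so $\#X_\rho=\Delta_d\rho^d+o(\rho^d)$) and averages the count $\#\bigl(X_\rho\cap(B(0,R_N)-t)\bigr)$ over translates $t\in B(0,\rho)$. The average equals $R_N^d\Delta_d+o(1)=N+o(1)$ for $R_N=(N/\Delta_d)^{1/d}$, so for large $\rho$ some translate of $B(0,R_N)$ contains at least $N$ points of $X_\rho$; this is a $2$-separated $N$-point set of diameter at most $2R_N$, giving $D_d(N)\le R_N$ exactly.

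For the lower bound, your ``tile periodically'' sketch does not control the density of the resulting infinite packing by $\Delta_d$ in any quantitative way, and scaling into the unit ball just reproduces the same finite-region difficulty. The paper's argument is genuinely different. Let $K_N$ be the convex hull of an optimal $2$-separated $N$-point set with $\diam K_N=2D_d(N)$, and let $\tilde K_N$ be its $2$-neighborhood. The key observation is a swap lemma: for \emph{every} admissible translate $t$ one has $\#\bigl(X_\rho\cap(\tilde K_N+t)\bigr)\ge M(K_N)\ge N$, since otherwise one could delete $X_\rho\cap(\tilde K_N+t)$ and insert a translate of the optimal $N$-point set into $K_N+t$, producing a strictly larger $2$-separated subset of $B(0,\rho)$ and contradicting maximality of $X_\rho$. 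Averaging this uniform lower bound over $t$ and letting $\rho\to\infty$ gives $\lambda^d(\tilde K_N)\,\Delta_d\ge N$. The final ingredient, which your outline is missing entirely, is the \emph{isodiametric inequality} $\lambda^d(\tilde K_N)\le \beta_d\bigl(\diam\tilde K_N/2\bigr)^d$; since $\diam\tilde K_N=2D_d(N)+4$, this yields $(D_d(N)+2)^d\Delta_d\ge N$, i.e.\ the claimed lower bound with the exact additive constant $2$.
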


\begin{proof}
We say that a set of points in $\R^d$ is {\em 2-separated} if the distance between any two points in the set is greater than
or equal to 2.  For a bounded set $K\subset \R^d$, let $M(K)$ denote the
maximum number of points that can be placed in $K$ under the
constraint that the distance between any two points is greater than
or equal to 2, i.e., $M(K)$ is the maximum cardinality of any 2-separated subset of $K$. 

  For a compact  set $K$ in $\R^d$, we let $\tilde K$ denote  the {\em 2-neighborhood of $K$} defined by
$$
\tilde K:=\{y\in K\, |\,\text{dist}(y,  K)\le 2\},
$$
and, for $t\in \R^d$, we let $K+t$ denote the translate of $K$ by $t$.  

For $\rho>1$, let $X_\rho$ denote a 2-separated collection of $M(B(0,\rho))$ points in $B(0,\rho)$, where $B(0,\rho)$ denotes the open ball centered at 0 with radius $\rho$.  Then it is known   (cf. \cite{CE}) that  $M(B(0,\rho))=\rho^d \Delta_d +o(\rho^d)$ as $\rho\to \infty$. Furthermore,  for any fixed $a>0$ we have $M\left(B(0,\rho)\setminus B(0,\rho-a)\right)=O(\rho^{d-1})$ as $\rho\to \infty$ which implies
\begin{equation}\label{Xrho}
\#\left(X_\rho\cap B(0,\rho-a)\right)=\rho^d \Delta_d +o(\rho^d) \text{ as $\rho\to \infty$,}
\end{equation}
where $\# {A}$ denotes the cardinality of a set $A$.

Let  $K$ be a compact convex set in $\R^d$ that contains the origin 0 and let $Y$ denote a 2-separated collection of $M(K)$ points in $K$.  If $t\in \R^d$ is such that 
$|t|\le \rho-\diam \tilde K$, then  $\tilde K+t$ is contained in $B(0,\rho)$ and  $X'_\rho=(X_\rho\setminus \tilde K+t)\cup (Y+t)$ is a 2-separated configuration in $B(0,\rho)$ of $\#X_\rho-\#\left(X_\rho\cap (\tilde K+t)\right)+M(K)$ points, from which it follows that  
\begin{equation} \label{XcapKt}
\#\left(X_\rho\cap (\tilde K+t)\right)\ge M(K).
\end{equation} 

Let $\mu_\rho$ denote the discrete measure $\mu_\rho=\sum_{x\in X_\rho}\delta_x$, where $\delta_x$ denotes the unit atomic mass at $x\in \R^d$ and let     $\lambda^d$ denote Lebesgue measure on $\R^d$.  As before, suppose $K$ is a compact convex set in $\R^d$ that contains 0 and let $\chi_K$ denote the characteristic function of $K$.  We next consider the following convolution integral which, by Tonelli's theorem, can be written as
\begin{equation}
\begin{split}
\iint_{B(0,\rho)\times X_\rho} \chi_K(x+t) d\mu_\rho(x)d\lambda^d(t) &=\int_{B(0,\rho)}\#(X_\rho\cap(K-t)) d\mu_\rho(x)d\lambda^d(t)\\
&=\int_{X_\rho} \lambda^d(B(0,\rho)\cap(K-x)) d\mu_\rho(x).
\end{split}
\end{equation}
If $|x|+\diam (K) \le \rho$, then  $K-x\subset B(0,\rho)$ and so  we have
\begin{equation}\label{aveIneq}
\begin{split}
\lambda^d(K)\#(X_\rho\cap B(0,\rho-\diam K)) &\le \int_{B(0,\rho)}\#(X_\rho\cap(K-t)) d\mu_\rho(x)d\lambda^d(t)\\
&\le \lambda^d(K)\#(X_\rho).
\end{split}
\end{equation}

For $N\ge 1$, letting  $R_N:=N^{1/d}\Delta_d^{-1/d}$ and choosing $K=B(0,R_N)$, the first inequality in \eqref{aveIneq} shows that 
$$ \#(X_\rho\cap B(0,\rho-2R_N))\lambda^d(B(0,R_N))\le \lambda^d(B(0,\rho))\max_t\#(B(-t,R_N)\cap X_\rho), $$ 
and  so, using \eqref{Xrho}, we obtain as $\rho\to\infty$ 
$$
\max_t\#(B(-t,R_N)\cap X_\rho)\ge \frac{ \#(X_\rho\cap B(0,\rho-2R_N))\lambda^d(B(0,R_N))}{\lambda^d(B(0,\rho))}=R_N^d\Delta_d+o(1) .
$$
Taking $\rho\to \infty$ it then follows that $M(B(0,R_N))\ge N$ and thus we have
\begin{equation}
D_d(N)\le \frac{\diam (B(0,R_N))}{2}=R_N=N^{1/d}\Delta_d^{-1/d}.  
\end{equation}

Next we derive the lower estimate for $D_d(N)$.  For $N\ge 2$, let $K_N$ denote the convex hull of a 2-separated configuration of $N$  points such that $\diam (K_N)=2D_d(N)$.  
Using the second inequality in \eqref{aveIneq} with $A=\tilde K_N$ and the inequality \eqref{XcapKt}, we  obtain  
\begin{equation}
\begin{split}
\lambda^d(\tilde K_N)\frac{\# X_\rho}{\rho^d} &\ge \frac{1}{\rho^d}\int_{B(0,\rho-\diam(\tilde K_N))}\#\left(X_\rho\cap(\tilde K_N-t)\right)d\lambda^d(t)\\
&\ge M(K_N)\frac{\lambda^d(B(0,\rho-\diam(\tilde K_N))}{\rho^d}.
\end{split}
\end{equation}
Recalling the isodiametric inequality (\cite{Ury}, see also \cite{BZ}) that $\lambda^d(A)\le \beta_d(\text{diam}(A)/2)^d$
for any bounded measurable set $A\subset \R^d$ and using \eqref{Xrho} and taking $\rho\to \infty$, we have 
$$ \left(\frac{\diam (\tilde K_N)}{2}\right)^d\Delta_d\ge M(K_N)\ge N. $$
Since $\diam(\tilde K_N)=4+\diam (K_N)=4+2D_d(N)$, it follows that  
\begin{equation}
D_d(N)\ge \Delta_d^{-1/d}N^{1/d}-2.
\end{equation}
\end{proof}

We remark that for the case $d=2$, Bezdek and Fodor \cite{BF} have shown that $D_2(N)\ge N^{1/2}\Delta_s^{-1/2}-1$, $N\ge 2$.  We also note that
at the conclusion of their article \cite{BE}, Bateman and Erd\"{o}s briefly mention that for $N\to \infty$ ``there are
asymptotic relations of the form $\frac{1}{2}D_d(N)\sim c_dN^{1/d},$" for some unknown constant $c_d$ and refer to a paper of Rankin \cite{R}. However, to the
authors' knowledge, there appears no explicit proof of this fact for arbitrary $d$ in \cite{R} or elsewhere.\\


Theorem~\ref{Thm1} together with Equation~\eqref{D_2} and Theorem~\ref{ThmDd} allow us to establish some
asymptotic estimates for the $N$-point $f$-best-packing constant
$\delta_d(N;f)$ of a fixed function $f\in \mathcal{A}$. For example, from \eqref{ex2.1} and \eqref{DdIneq}  we have for $d\ge 1$, 
$$
\delta_d(N;f_{p,q})=1/D_d(N)=\Delta_d^{1/d}N^{-1/d}+O(N^{-2/d}),
\quad N\to\infty.
$$

We will now investigate how well $\delta_d(N;f)$ can be
approximated by $f(\tau(N^{1/d}\Delta_d^{-1/d}))$, as $N\to\infty$,
where $\tau(\alpha)$ is the unique solution of~\eqref{FunctEqn}.  For this purpose  the following simple lemma is useful.

\begin{lemma}\label{alphaLemma1}
Let $f$, $M$, and $\varepsilon$ be as in Lemma~\ref{alphaLemma} and let  $A$ and $A+\lambda$   both be greater than $M/\varepsilon$.  If $\lambda\le 0$, we further  assume that $A\le (A+\lambda)^2$.  Then the following inequalities hold:
\begin{equation} \label{in1}
f(A\tau(A)/(A+\lambda))\le f(\tau(A+\lambda))\le f(\tau(A)), \text{ if $\lambda\ge 0$,}
\end{equation}
\begin{equation}
\label{in2} f((A+\lambda)\tau(A))\le f(\tau(A+\lambda))\le
f(A\tau(A)), \text{ if }\lambda\ge 0,
\end{equation}
\begin{equation}
 \label{in3} f(\tau(A)) \le   f(\tau(A+\lambda)) \le f\left(\frac{A\tau(A)}{ A+\lambda }\right), \text{ if $\lambda\le 0$},  \; \frac{A\tau(A)}{(A+\lambda)}\le M,
\end{equation}
\begin{equation}
\label{in4} f(A\tau(A))\le f(\tau(A+\lambda))\le
f((A+\lambda)\tau(A)), \text{ if  } \lambda\le 0, \; \varepsilon\le (A+\lambda)\tau(A).
\end{equation}
\end{lemma}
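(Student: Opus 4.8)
The plan is to derive all four displays from two monotonicity properties of the map $\alpha\mapsto\tau(\alpha)$, the defining relation $f(\tau(\alpha))=f(\alpha\tau(\alpha))$, and the inclusion $\tau(\alpha)\in(M/\alpha,\varepsilon)$ supplied by Lemma~\ref{alphaLemma}.

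The first step is to show that $\tau$ is strictly decreasing on $(M/\varepsilon,\infty)$. Given $\alpha'>\alpha>M/\varepsilon$, one has $\alpha'\tau(\alpha)>\alpha\tau(\alpha)>M$, so strict monotonicity of $f$ on $[M,\infty)$ gives $f(\alpha'\tau(\alpha))<f(\alpha\tau(\alpha))=f(\tau(\alpha))$; hence $g(t):=f(\alpha't)-f(t)$ is negative at $t=\tau(\alpha)$. Since $\tau(\alpha)\in(M/\alpha,\varepsilon)\subset(M/\alpha',\varepsilon)$ and $g$ is strictly decreasing on $[M/\alpha',\varepsilon]$ (as in the proof of Lemma~\ref{alphaLemma}), its unique zero $\tau(\alpha')$ in that interval satisfies $\tau(\alpha')<\tau(\alpha)$. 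From this I record two corollaries: since $\tau(\alpha)<\varepsilon$ and $f$ is strictly increasing on $[0,\varepsilon]$, the map $\alpha\mapsto f(\tau(\alpha))$ is strictly decreasing; and, rewriting this through $f(\tau(\alpha))=f(\alpha\tau(\alpha))$ while using that $f$ is strictly decreasing on $[M,\infty)$ with $\alpha\tau(\alpha)>M$, the map $\alpha\mapsto\alpha\tau(\alpha)$ is strictly increasing.

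Write $B:=A+\lambda$. The bounds involving $f(\tau(A))$ and $f(A\tau(A))$ are then immediate from the monotonicity of $\alpha\mapsto f(\tau(\alpha))$ together with $f(A\tau(A))=f(\tau(A))$: for instance $f(\tau(B))\le f(\tau(A))$ when $B\ge A$, and $f(\tau(A))\le f(\tau(B))$ when $B\le A$. In the remaining bounds the argument of $f$ is $A\tau(A)/B$ (in \eqref{in1} and \eqref{in3}) or $B\tau(A)$ (in \eqref{in2} and \eqref{in4}). When $\lambda\ge 0$ these are harmless: $A\tau(A)/B\le\tau(A)<\varepsilon$, so monotonicity of $f$ on $[0,\varepsilon]$ turns \eqref{in1} into $A\tau(A)\le B\tau(B)$, which is the monotonicity of $\alpha\mapsto\alpha\tau(\alpha)$; and $B\tau(A)\ge B\tau(B)>M$ (since $\tau(A)\ge\tau(B)$), so monotonicity of $f$ on $[M,\infty)$ gives \eqref{in2} at once. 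When $\lambda\le 0$ the argument on the right-hand side may instead fall inside the interval $[\varepsilon,M]$, on which $f$ need not be monotone, and this is the single point needing care. I would split on its position: if it lies in $[\varepsilon,M]$, then by \eqref{epsilonM} its $f$-value is at least $f(\varepsilon)=f(M)$, which already dominates $f(\tau(B))=f(B\tau(B))\le f(M)$ since $B\tau(B)>M$; if it lies outside $[\varepsilon,M]$, the appropriate monotonicity of $f$ applies and the claim collapses to an elementary comparison --- $B\tau(B)\le A\tau(A)$ for \eqref{in3} and $B\tau(A)\le B\tau(B)$ for \eqref{in4} --- both already recorded above. The extra hypotheses $A\tau(A)/B\le M$, $\varepsilon\le B\tau(A)$, and $A\le B^2$ serve precisely to pin down which of the two positions occurs in each display.

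I expect the only genuine work to be the bookkeeping in this last step: for each of the four displays one must check that the stated hypothesis forces the intended sub-case and that the monotonicity being invoked --- of $f$ on $[0,\varepsilon]$, of $f$ on $[M,\infty)$, of $\tau$, or of $\alpha\mapsto\alpha\tau(\alpha)$ --- has the correct sign. Once the two monotonicity statements for $\tau$ are in hand, no sub-case occupies more than a single line.
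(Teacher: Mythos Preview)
Your approach is correct and is precisely the one the paper takes: the paper's entire proof is the single sentence ``The inequalities follow easily from the facts that $\tau(t)$ is decreasing and $t\tau(t)$ is increasing for $t>M/\varepsilon$,'' and you have both supplied a proof of those two monotonicity facts and sketched the case analysis the paper leaves to the reader. One small remark: your closing sentence suggests that the side conditions (including $A\le B^2$) each \emph{select} one of the two sub-cases, but in fact the conditions $A\tau(A)/B\le M$ and $\varepsilon\le B\tau(A)$ merely \emph{exclude} the one bad position (respectively $A\tau(A)/B>M$ and $B\tau(A)<\varepsilon$) in which the comparison would fail, while both remaining sub-cases still need the short argument you gave; the hypothesis $A\le B^2$ does not seem to enter your argument at all, and the paper does not indicate where it is used either.
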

\begin{proof}
The inequalities follow easily from the facts that $\tau(t)$ is  decreasing   and $t\tau(t)$ is   increasing   for $t>M/\varepsilon$.
\end{proof}
This lemma allows us to obtain asymptotic estimates on
$\delta_d(N;f)$, $d\ge 2$, for some subclasses of functions
$f\in\mathcal{A}$.  Set
$A:=N^{1/d}\Delta_d^{-1/d}$, $\lambda:=D_d(N)-A$. Then by applying
Theorem~\ref{ThmDd} and Lemma~\ref{alphaLemma1}  we immediately obtain
the following.

\begin{corollary}\label{cor3}
Let $f\in\mathcal{A}$. If, for some $\beta\in (0,1)$, both of the   following
conditions hold,
\begin{equation}
\label{cor31}
\lim_{t\to 0^+}\frac{f(t+g(t))}{f(t)}=1,\quad\text{for
each }g(t)=O(t^{1+1/\beta}),\,t\to 0^+,
\end{equation}
and
\begin{equation}
\label{cor32}
\lim_{t\to\infty}\frac{f(t+g(t))}{f(t)}=1,\quad\text{for each
}g(t)=O(t^{-\beta/(1-\beta)}),\,t\to\infty,
\end{equation}
 then
\begin{equation}
\label{cor33}
\lim_{N\to\infty}\frac{\delta_d(N;f)}{f\left(\tau(N^{1/d}/\Delta_d)\right)}=1.
\end{equation}
\end{corollary}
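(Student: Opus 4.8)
\textit{Setup.} The plan is to read off, from Theorem~\ref{Thm1} and Theorem~\ref{ThmDd}, that for $N$ large
$$\delta_d(N;f)=f(\tau(A+\lambda)),\qquad A:=N^{1/d}\Delta_d^{-1/d},\quad \lambda:=D_d(N)-A\in[-2,0],$$
and also $f(\tau(A))=f(A\tau(A))$ by \eqref{FunctEqn}; so \eqref{cor33} amounts to $f(\tau(A+\lambda))/f(\tau(A))\to1$. Before invoking Lemma~\ref{alphaLemma1} I would record two limits valid for every $f\in\mathcal A$: $\tau(A)\to0$ and $A\tau(A)\to\infty$ as $N\to\infty$. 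Both follow from $\tau(A)\in(M/A,\varepsilon)$ (Lemma~\ref{alphaLemma}), $f(\tau(A))=f(A\tau(A))$, $f>0$ on $(0,\infty)$, $f(0)=0$, and $\lim_{t\to\infty}f(t)=0$, via a subsequence argument: if $\tau(A)\to c>0$ along a subsequence, then on it $A\tau(A)\to\infty$, so $f(A\tau(A))\to0$ while $f(\tau(A))\to f(c)>0$; and once $\tau(A)\to0$, a bounded subsequential limit $L\ (\ge M)$ of $A\tau(A)$ would give $f(A\tau(A))\to f(L)>0$ while $f(\tau(A))\to f(0)=0$.

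\textit{Two-sided bounds.} Since $\lambda\le0$ and, for $N$ large, $0<A-2\le A+\lambda=D_d(N)$ gives $A\le(A-2)^2\le(A+\lambda)^2$, the hypotheses of Lemma~\ref{alphaLemma1} hold; its side conditions also hold for $N$ large, because $A\tau(A)/(A+\lambda)\le\frac{A}{A-2}\,\tau(A)\to0<M$ (for \eqref{in3}) and $(A+\lambda)\tau(A)\ge(A-2)\tau(A)=A\tau(A)-2\tau(A)\to\infty$ (for \eqref{in4}). Writing $A\tau(A)/(A+\lambda)=\tau(A)+g$ with $0\le g=\tau(A)\tfrac{-\lambda}{A+\lambda}\le\tfrac{2\tau(A)}{A-2}$, and $(A+\lambda)\tau(A)=A\tau(A)+h$ with $-2\tau(A)\le h=\lambda\tau(A)\le0$, inequalities \eqref{in3} and \eqref{in4} become $f(\tau(A))\le f(\tau(A+\lambda))\le f(\tau(A)+g)$ and $f(\tau(A))=f(A\tau(A))\le f(\tau(A+\lambda))\le f(A\tau(A)+h)$, respectively. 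In particular $\delta_d(N;f)/f(\tau(A))\ge1$ always, so only a matching upper bound is in question.

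\textit{The dichotomy and squeeze.} The crux is to split on whether $\tau(A)\le A^{-\beta}$ or $\tau(A)>A^{-\beta}$; neither \eqref{in3} nor \eqref{in4} alone controls the ratio. If $\tau(A)\le A^{-\beta}$, use \eqref{in4}: then $A\tau(A)\le A^{1-\beta}$, hence $|h|\le2\tau(A)\le2A^{-\beta}\le2(A\tau(A))^{-\beta/(1-\beta)}$; set $\tilde h(s):=-2s^{-\beta/(1-\beta)}=O(s^{-\beta/(1-\beta)})$, so $\tilde h(A\tau(A))\to0$ (as $\beta\in(0,1)$) and $\tilde h(A\tau(A))\le h\le0$. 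Since $f$ is decreasing on $[M,\infty)$ and $A\tau(A)$, $A\tau(A)+h$, $A\tau(A)+\tilde h(A\tau(A))$ all exceed $M$ for $N$ large,
\[
1\le\frac{\delta_d(N;f)}{f(\tau(A))}\le\frac{f\bigl(A\tau(A)+\tilde h(A\tau(A))\bigr)}{f(A\tau(A))}\to1
\]
by \eqref{cor32}. If instead $\tau(A)>A^{-\beta}$, use \eqref{in3}: for $N$ large $g\le\tfrac{2\tau(A)}{A-2}\le\tfrac{4\tau(A)}{A}<4\tau(A)^{1+1/\beta}$, the last step because $A^{-1}<\tau(A)^{1/\beta}$; set $\tilde g(t):=4t^{1+1/\beta}=O(t^{1+1/\beta})$, so $g<\tilde g(\tau(A))$. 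Since $f$ is increasing on $[0,\varepsilon]$ and $\tau(A)+\tilde g(\tau(A))<\varepsilon$ for $N$ large,
\[
1\le\frac{\delta_d(N;f)}{f(\tau(A))}\le\frac{f\bigl(\tau(A)+\tilde g(\tau(A))\bigr)}{f(\tau(A))}\to1
\]
by \eqref{cor31}. The two cases are exhaustive, so $\delta_d(N;f)/f(\tau(A))\in[1,1+\epsilon)$ for all $N$ sufficiently large, which is \eqref{cor33}.

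\textit{Main obstacle.} I expect the difficulty to be exactly in recognizing the dichotomy and that $A^{-\beta}$ is the correct threshold: \eqref{in3} yields a perturbation of order $\tau(A)/A$ at the vanishing scale $\tau(A)$, whereas \eqref{in4} yields one of order $\tau(A)$ at the diverging scale $A\tau(A)$, and these are admissible in the sense of \eqref{cor31} and \eqref{cor32} respectively precisely when $\tau(A)\le A^{-\beta}$, resp.\ $\tau(A)>A^{-\beta}$, so for every large $N$ one of the two routes closes the estimate. The remaining points — checking the side conditions of Lemma~\ref{alphaLemma1}, and dominating the $N$-dependent perturbations $g,h$ by the fixed functions $\tilde g,\tilde h$ using monotonicity of $f$ near $0$ and near $\infty$ so that \eqref{cor31}--\eqref{cor32} apply verbatim to $\tilde g,\tilde h$ — are routine.
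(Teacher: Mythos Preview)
Your argument is correct and follows the same route as the paper: identify $\delta_d(N;f)=f(\tau(A+\lambda))$ with $A=N^{1/d}\Delta_d^{-1/d}$ and $\lambda\in[-2,0]$, then split according to the size of $\tau$ relative to a $\beta$-power of the argument and apply the appropriate pair of inequalities from Lemma~\ref{alphaLemma1} together with \eqref{cor31} or \eqref{cor32}. The only cosmetic differences are that the paper places the dichotomy on $\tau(D_d(N))$ versus $N^{-\beta/d}$ (rather than $\tau(A)$ versus $A^{-\beta}$, which differs by the harmless constant $\Delta_d^{\beta/d}$) and cites \eqref{in1}--\eqref{in2} alongside \eqref{in3}--\eqref{in4}; your version, using only \eqref{in3}--\eqref{in4} since $\lambda\le 0$, together with the explicit dominating functions $\tilde g,\tilde h$ and the preliminary verification that $\tau(A)\to 0$ and $A\tau(A)\to\infty$, makes the squeeze fully explicit where the paper is terse.
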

\begin{proof}
If  $\tau(D_d(N))>N^{-\beta/d}$ for some sequence of integers $N$, then~\eqref{cor33} holds
by~\eqref{DdIneq}, \eqref{in1}, \eqref{in3}, \eqref{cor31}, while if
$\tau(D_d(N))\le N^{-\beta/d}$ for infinitely many $N$, then~\eqref{cor33} holds
by~\eqref{DdIneq}, \eqref{in2}, \eqref{in4}, \eqref{cor32}.
\end{proof}

For the Gaussian weighted best-packing problem in $\R^2$ mentioned earlier, where $f(t)=t \exp (-t^2)$, the above corollary readily
yields the asymptotic result \eqref{delta2asymp}.

The following example illustrates the sharpness of Corollary~\ref{cor3}.
Let $f(x)=\exp\{-1/x^2\}$ for $x\in (0,1)$, and $f(x)=\exp\{-x^2\}$ for
$x\ge 1$. We have
$$
\delta_2(N;f)=\exp\{-D_2(N)\}=O(\exp\{-\frac{12^{1/4}}{\pi^{1/2}}N^{1/2}\}),\quad
N\to\infty,
$$
$$
f(t+g(t))=O(f(t)),\quad\text{for each }g(t)=O(t^3),\,t\to 0,
$$
and
$$
f(t+g(t))=O(f(t)),\quad\text{for each }g(t)=O(1/t),\,t\to\infty.
$$
This example shows that Corollary~\ref{cor3} is optimal in the sense that it is not possible to simultaneously increase the constant $1+1/\beta$
and reduce the constant $-\beta/(1-\beta)$.

\end{document}